\newtheorem{theorem}{Theorem}
\newtheorem{lemma}{Lemma}
\newtheorem{corollary}{Corollary}
\theoremstyle{definition}
\newtheorem{definition}{Definition}
\newtheorem{assumption}{Assumption}
\newtheorem{remark}{Remark}
\newcommand{\R}{\mathbb{R}}
\definecolor{pastelMagenta}{HTML}{FF48CF}
\definecolor{pastelPurple}{HTML}{8770FE}
\definecolor{pastelBlue}{HTML}{1BA1EA}
\definecolor{pastelSeaGreen}{HTML}{14B57F}
\definecolor{pastelGreen}{HTML}{3EAA0D}
\definecolor{pastelOrange}{HTML}{C38D09}
\definecolor{pastelRed}{HTML}{F5615C}
\definecolor{myBlue1}{RGB}{49, 114, 174}
\definecolor{myRed1}{RGB}{224, 107, 97}
\definecolor{myGreen1}{RGB}{68, 156, 118}
\definecolor{myPurple1}{RGB}{117, 112, 173}
\definecolor{myYellow1}{RGB}{221, 162, 66}
\definecolor{myMagenta1}{RGB}{202, 98, 159}
\definecolor{myCyan1}{RGB}{114, 179, 224}
\definecolor{myBlue2}{cmyk}{0.8336, 0.5245, 0.0745, 0}
\definecolor{myRed2}{cmyk}{0.0808, 0.7143, 0.5968, 0.0028}
\definecolor{myGreen2}{cmyk}{0.7402, 0.1786, 0.0667, 0.0021}
\definecolor{myPurple2}{cmyk}{0.6079, 0.5931, 0.0401, 0}
\definecolor{myYellow2}{cmyk}{0.1289, 0.3792, 0.8644, 0.0014}
\definecolor{myMagenta2}{cmyk}{0.1821, 0.7493, 0.0432, 0}
\definecolor{myCyan2}{cmyk}{0.5226, 0.1608, 0.0075, 0}
\definecolor{juliaBlue}{rgb}{0.255, 0.388, 0.847}
\definecolor{juliaRed}{rgb}{0.796, 0.235, 0.2}
\definecolor{juliaGreen}{rgb}{0.22, 0.596, 0.149}
\definecolor{juliaPurple}{rgb}{0.706, 0.322, 0.804}
\pgfplotsset{compat=newest}
\pgfplotsset{every axis legend/.append style={legend cell align=left, font=\small, draw=none, fill=none}}
\pgfplotsset{every axis/.append style={axis background/.style={fill=white}}}
\pgfplotsset{every axis plot post/.append style={
	smooth,
	line width=2pt,
	line join = round,
	mark = none,
	}
}
\pgfplotsset{
tick label style={font=\footnotesize},
label style={font=\normalsize},
legend style={font=\small},
}
\pgfplotsset{every axis/.append style={
thick,
tick style={semithick}}}
\title{\bf{High Order Robust Adaptive Control Barrier Functions and Exponentially Stabilizing Adaptive Control Lyapunov Functions}}
\author{Max H. Cohen and Calin Belta%
\thanks{The authors are with the Department of Mechanical Engineering, Boston
University, 110 Cummington Mall, Boston, MA 02215, United States \{\texttt{maxcohen,cbelta}\}@\texttt{bu.edu.} This work is supported by the NSF under grants DGE-1840990 and IIS-2024606. Any opinions, findings, conclusions or recommendations expressed in this material are those of the author(s) and do not necessarily reflect the views of the NSF.}
}
\date{}
\begin{document}

\maketitle

\begin{abstract}
This paper studies the problem of utilizing data-driven adaptive control techniques to guarantee stability and safety of uncertain nonlinear systems with high relative degree. We first introduce the notion of a High Order Robust Adaptive Control Barrier Function (HO-RaCBF) as a means to compute control policies guaranteeing satisfaction of high relative degree safety constraints in the face of parametric model uncertainty. The developed approach guarantees safety by initially accounting for all possible parameter realizations but adaptively reduces uncertainty in the parameter estimates leveraging data recorded online. We then introduce the notion of an Exponentially Stabilizing Adaptive Control Lyapunov Function (ES-aCLF) that leverages the same data as the HO-RaCBF controller to guarantee exponential convergence of the system trajectory. The developed HO-RaCBF and ES-aCLF are unified in a quadratic programming framework, whose efficacy is showcased via two numerical examples that, to our knowledge, cannot be addressed by existing adaptive control barrier function techniques.
\end{abstract}

\section{Introduction}
The problem of developing control policies that guarantee stability and safety of nonlinear control systems has received significant attention in recent years. In particular, the unification of Control Lyapunov Functions (CLFs) \cite{SontagSCL89,AmesTAC14} and Control Barrier Functions (CBFs) \cite{AmesTAC17,AmesECC19} has provided a pathway towards safe and stable control of complex nonlinear systems such as autonomous vehicles \cite{AmesCDC14,WeiAutomatica21}, multi-agent systems \cite{AmesTRO17}, and bipedal robots \cite{AmesCDC16}. Although powerful, the guarantees afforded by these approaches are model-based, hence the success in transferring such guarantees to real-world systems is inherently tied to the fidelity of the underlying system model. Inevitably, such models are only an approximation of the true system due to parametric uncertainties and unmodeled dynamics, thus there is strong motivation to study the synthesis of CLF and CBF-based controllers in the presence of model uncertainty. Although robust approaches \cite{JankovicAutomatica18,AmesIEEEA20} have demonstrated success in this regard, in general, such techniques can be highly conservative. On the other hand, data-driven approaches have demonstrated the ability to reduce uncertainty and yield high-performance controllers in terms of both safety and stability. Popular data-driven approaches {\color{black} for} reducing uncertainty include work based on episodic learning \cite{TaylorL4DC20,SreenathADHS21} or by modeling the uncertainty using Gaussian processes (GPs) \cite{SreenathCDC21,DhimanTAC21}. However, providing strong guarantees in an episodic learning setting is challenging and, although {\color{black} GP-based approaches account for very general classes of uncertainties, GPs can be computationally intensive and the generality offered by GPs generally results in probabilistic, rather than deterministic, guarantees on stability and safety.}

As adaptive control \cite{Krstic} has a long history of success in controlling nonlinear systems with parametric uncertainty, there is also a rich line of work that unites CLFs and CBFs with techniques from adaptive control. The authors of \cite{TaylorACC20} extend the adaptive CLF (aCLF) paradigm \cite[Ch. 4.1]{Krstic} to CBFs, yielding the first instance of an \emph{adaptive} CBF\footnote{The term aCBF was also used in \cite{WeiTAC21-adaptive-cbf} to refer to a class of CBFs that account for time-varying control bounds.} (aCBF) that allows for the safe control of uncertain nonlinear systems with parametric uncertainty. The authors of \cite{LopezLCSS21,DixonACC21,SanfeliceACC21} extend the aCBF techniques from \cite{TaylorACC20} using set-membership identification, concurrent learning (CL) \cite{Chowdhary,DixonIJACSP19}, and hybrid techniques, respectively, which were shown to reduce the conservatism of original aCBF formulation. Nevertheless, all of the aforementioned aCBF techniques are limited to CBFs with relative degree one. In practice, however, many safety-critical constraints have relative degrees larger than one (e.g., constraints on the configuration of a mechanical system generally have at least relative degree two). The unification of CLFs/CBFs with techniques from CL adaptive control was also presented in \cite{AzimiCDC18,Azimi}; however, the resulting CLF controllers either only guarantee uniformly ultimately bounded stability or are limited to single-input feedback linearizable systems. Importantly, the CL-based aCBF controllers from \cite{AzimiCDC18,Azimi} do not provide strong safety guarantees since the CBF-based control inputs are generated using the estimated dynamics without accounting for estimation errors, leading to potential safety violations that can be understood through the notion of input-to-state-safety \cite{AmesLCSS19}.

To address high relative degree safety constraints for systems with \emph{known} dynamics, the authors of \cite{SreenathACC16,WeiCDC19,WeiTAC21-hocbf,DimosTAC21-hocbf} introduce exponential and high order CBFs (HOCBFs), which provide a systematic framework to construct CBFs that account for high relative degree constraints. Importantly, as noted in \cite{DimosTAC21-hocbf}, HOCBFs can also be used to simplify the search for valid CBFs since the dependence on higher order dynamics need not be directly encoded through the definition of the CBF itself (as in the relative degree one case). Rather, {\color{black} the} dependence {\color{black} on} higher order dynamics is \emph{implicitly} encoded through conditions on higher order derivatives of the CBF candidate. Despite the advancements of both aCBFs and HOCBFs, to our knowledge, the intersection of these two techniques has yet to be explored in the literature.

In this paper we unite aCBFs and HOCBFs to develop control policies satisfying high relative degree safety constraints for nonlinear systems with parametric uncertainty. Similar to \cite{DixonACC21}, our approach leverages the concurrent learning technique presented in \cite{DixonIJACSP19}, which identifies uncertain parameters of the nonlinear system online by exploiting sufficiently rich data collected along the system trajectory. A key insight enabling our high relative degree approach is that if the relative degrees of the CBF with respect to the control and uncertain parameters are the same (in a sense to be clarified later in this paper), then the sufficient conditions for safety can be encoded through affine constraints on the control input, allowing control synthesis to be performed in a computationally efficient quadratic programming framework. Furthermore, unlike existing aCBF formulations for relative degree one constraints \cite{TaylorACC20,LopezLCSS21,DixonACC21,SanfeliceACC21}, we show that our High Order Robust Adaptive Control Barrier Functions (HO-RaCBFs) inherit the robustness properties of zeroing CBFs \cite{AmesADHS15} in the sense that our developed aCBFs not only render the safe set forward invariant, but also \emph{asymptotically stable} when solutions begin outside the safe set. We then introduce a novel class of aCLF, termed exponentially stabilizing aCLFs (ES-aCLFs), that extends the CL paradigm from \cite{DixonIJACSP19} to a CLF setting by exploiting the same history stack used to reduce conservatism of the safety controller to endow a nominal CLF-based control policy with \emph{exponential} stability guarantees. The efficacy of the combined HO-RaCBF/ES-aCLF controller is demonstrated through simulations of a robotic navigation task and an inverted pendulum, both of which involve high relative degree safety constraints that cannot be addressed by existing aCBF approaches.

\section{Mathematical Preliminaries}
Consider a nonlinear control affine system of the form
\begin{equation}\label{eq:dyn}
    \dot{x}=f(x) + g(x)u,
\end{equation}
with state $x\in\mathbb{R}^n$ and control $u\in\mathcal{U}\subseteq\mathbb{R}^m$, where $f\,:\,\mathbb{R}^n\rightarrow\mathbb{R}^n$ and $g\,:\,\mathbb{R}^n\rightarrow\mathbb{R}^{n\times m}$ are locally Lipschitz vector fields modeling the drift and control directions, respectively. Given a feedback law $u=k(x,t)$, locally Lipschitz in $x$ and piecewise continuous in $t$, the closed-loop vector field $f_{\text{cl}}(x,t)\coloneqq f(x)+g(x)k(x,t)$ is also locally Lipschitz in $x$ and piecewise continuous in $t$, implying that \eqref{eq:dyn} admits a unique solution $x\,:\,\mathcal{I}\rightarrow\mathbb{R}^n$ starting from $x(0)\in\mathbb{R}^n$ on some maximal interval of existence $\mathcal{I}\subset\mathbb{R}_{\geq0}$. A closed set $\mathcal{C}\subset\mathbb{R}^n$ is said to be \emph{forward invariant} for the closed-loop system $\dot{x}=f_{\text{cl}}(x,t)$ if $x(0)\in\mathcal{C}\implies x(t)\in\mathcal{C}$ for all $t\in\mathcal{I}$. In this paper (and in the related literature) forward invariance is used to formalize the abstract notion of safety. Hence, if a given ``safe" set $\mathcal{C}$ is forward invariant for $\dot{x}=f_{\text{cl}}(x,t)$, then we say the closed-loop system is \emph{safe} with respect to $\mathcal{C}$. It will be assumed throughout this paper that any safe set $\mathcal{C}$ can be expressed as the zero-superlevel set of a continuously differentiable function $h\,:\,\mathbb{R}^n\rightarrow\mathbb{R}$ as
\begin{equation}\label{eq:C}
    \mathcal{C}=\{x\in\mathbb{R}^n\,|\,h(x)\geq0\}.
\end{equation}
A popular tool for developing controllers that render \eqref{eq:C} forward invariant for \eqref{eq:dyn} is the concept of a CBF \cite{AmesTAC17,AmesECC19}, which places Lyapunov-like conditions on the derivative of $h$ to guarantee safety. A limitation of traditional CBFs from \cite{AmesTAC17}, however, is that their effectiveness is conditioned upon the assumption that the function $h$ has relative degree one. Yet, many relevant systems and safe sets fail to satisfy such a condition, which has motivated the introduction of exponential and higher order CBFs \cite{SreenathACC16,WeiCDC19,WeiTAC21-hocbf,DimosTAC21-hocbf} to account for safety constraints with \emph{high relative degree}. Before proceeding, we recall that the \emph{Lie derivative} of a differentiable function $h\,:\,\mathbb{R}^n\rightarrow\mathbb{R}$ along a vector field $f\,:\,\mathbb{R}^n\rightarrow\mathbb{R}^n$ is defined as $L_fh(x)\coloneqq \tfrac{\partial h}{\partial x}f(x)$. This notation allows us to denote higher order Lie derivatives along an additional vector field $g$ as $L_gL_f^{i-1}h(x)=\tfrac{\partial(L_f^{i-1}h)}{\partial x}g(x)$ (see e.g. \cite[Ch. 13.2]{Khalil}). 

\begin{definition}[\cite{Khalil}]
    A sufficiently smooth function $h\,:\,\mathbb{R}^n\rightarrow\mathbb{R}$ is said to have \emph{relative degree} $r\in\mathbb{N}$ with respect to \eqref{eq:dyn} on a set $\mathcal{R}\subset\mathbb{R}^n$ if 1) for all $1\leq i\leq r-1$, $L_gL_f^{i-1}h(x)\equiv0$; 2) $L_gL_f^{r-1}h(x)\neq0$ for all $x\in\mathcal{R}$.
\end{definition}

To account for high relative degree safety constraints, the authors of \cite{WeiCDC19,WeiTAC21-hocbf,DimosTAC21-hocbf} introduce the notion of a HOCBF. Before stating the definition, we recall from \cite{AmesTAC17} that a continuous function $\alpha\,:\,(-b,a)\rightarrow(-\infty,\infty)$, for some $a,b\in\mathbb{R}_{>0}$, is said to be an \emph{extended class} $\mathcal{K}$ \emph{function} if it is strictly increasing and $\alpha(0)=0$.

\begin{definition}[\cite{WeiCDC19,WeiTAC21-hocbf,DimosTAC21-hocbf}]\label{def:HOCBF}
    Consider system \eqref{eq:dyn} and a set $\mathcal{C}\subset\mathbb{R}^n$ as in \eqref{eq:C}. Let  $\{\mathcal{C}_i\}_{i=1}^r$ be a collection of sets of the form $\mathcal{C}_i\coloneqq\{x\in\mathbb{R}^n\,|\,\psi_{i-1}(x)\geq0\}$, where $\psi_{0}(x)\coloneqq h(x)$ and
    \begin{equation}\label{eq:psi}
        \begin{aligned}
        \psi_{i}(x)\coloneqq & \dot{\psi}_{i-1}(x) + \alpha_{i}(\psi_{i-1}(x)),\; i\in\{1,\dots,r-1\},\\
        \psi_{r}(x,u)\coloneqq & \dot{\psi}_{r-1}(x,u) + \alpha_r(\psi_{r-1}(x)),
        \end{aligned}
    \end{equation}
    where $\{\alpha_{i}\}_{i=1}^r$ is a collection of differentiable extended class $\mathcal{K}$ functions. Then, the function $h$ is said to be a HOCBF of order $r$ for \eqref{eq:dyn} on an open set $\mathcal{D}\supset\cap_{i=1}^r\mathcal{C}_i$ if $h$ has relative degree $r$ on some nonempty $\mathcal{R}\subseteq\mathcal{D}$ and there exists a suitable choice of $\{\alpha_{i}\}_{i=1}^r$ such that for all $x\in\mathcal{D}$
    \[
        \sup_{u\in\mathcal{U}}\{\underbrace{L_{f}\psi_{r-1}(x) + L_g\psi_{r-1}(x)u + \alpha_r(\psi_{r-1}(x))}_{\psi_r(x,u)}\} \geq0.
    \]
\end{definition}

If $\mathcal{U}=\mathbb{R}^m$, the above states that $h$ is a HOCBF if $\|L_g\psi_{r-1}h(x)\|=0$ $\implies$ $L_f\psi_{r-1}h(x) \geq - \alpha_r(\psi_{r-1}(x))$, implying that $h$ need not have \emph{uniform} relative degree on $\mathcal{D}$ as illustrated in \cite{DimosTAC21-hocbf}, provided the unforced dynamics satisfy the above condition at points where $\|L_g\psi_{r-1}h(x)\|=0$. The following result provides higher order conditions for safety.

\begin{theorem}[\cite{DimosTAC21-hocbf}]\label{theorem:HOCBF}
    Let $h$ be a HOCBF for \eqref{eq:dyn} on $\mathcal{D}\subset\mathbb{R}^n$ as in Def. \ref{def:HOCBF}. Then, any locally Lipschitz controller $u=k(x)\in K_{\text{cbf}}(x)$, where $K_{\text{cbf}}(x)\coloneqq  \{u\in\mathcal{U}\,|\,\psi_r(x,u)\geq0\}$, renders $\cap_{i=1}^r\mathcal{C}_i^r$ forward invariant for the closed-loop system. 
\end{theorem}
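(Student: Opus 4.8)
The plan is to argue by backward induction on the layer index, peeling off one level of the HOCBF construction at a time with the comparison lemma. Fix any closed-loop trajectory $x(\cdot)$ with $x(0)\in\cap_{i=1}^r\mathcal{C}_i$. Membership $k(x)\in K_{\text{cbf}}(x)$ means that $\psi_r(x(t),k(x(t)))\geq0$ for all $t\in\mathcal{I}$, which by \eqref{eq:psi} is precisely the scalar differential inequality $\dot\psi_{r-1}(x(t))\geq-\alpha_r(\psi_{r-1}(x(t)))$ along the trajectory. Since $x(0)\in\cap_{i=1}^r\mathcal{C}_i\subseteq\mathcal{C}_r$, we have $\psi_{r-1}(x(0))\geq0$, so I would bound $t\mapsto\psi_{r-1}(x(t))$ from below by the solution $y(\cdot)$ of the comparison problem $\dot y=-\alpha_r(y)$, $y(0)=\psi_{r-1}(x(0))\geq0$.

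The auxiliary fact I would establish first is that this comparison system preserves nonnegativity. Because $\alpha_r$ is an extended class $\mathcal{K}$ function, $\alpha_r(0)=0$, so $y\equiv0$ is a solution through the origin; by uniqueness of solutions (guaranteed by local Lipschitzness of the differentiable $\alpha_r$) any trajectory with $y(0)\geq0$ cannot cross $0$ and hence stays nonnegative on its interval of existence. Combining this with the comparison lemma gives $\psi_{r-1}(x(t))\geq y(t)\geq0$, i.e. $x(t)\in\mathcal{C}_r$ for all $t\in\mathcal{I}$. This is the base case.

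The inductive step is identical in spirit. Suppose I have shown $\psi_j(x(t))\geq0$ for all $t$ and some $j$ with $1\leq j\leq r-1$. By \eqref{eq:psi}, $\psi_j(x)=\dot\psi_{j-1}(x)+\alpha_j(\psi_{j-1}(x))$, so nonnegativity of $\psi_j$ along the trajectory reads $\dot\psi_{j-1}\geq-\alpha_j(\psi_{j-1})$; since $x(0)\in\mathcal{C}_j=\{\psi_{j-1}\geq0\}$, the same comparison argument yields $\psi_{j-1}(x(t))\geq0$, i.e. $x(t)\in\mathcal{C}_j$. Descending from $j=r-1$ down to $j=1$ establishes $\psi_{i-1}(x(t))\geq0$ for every $i\in\{1,\dots,r\}$, and in particular $\psi_0(x(t))=h(x(t))\geq0$; hence $x(t)\in\cap_{i=1}^r\mathcal{C}_i$ for all $t\in\mathcal{I}$, which is the asserted forward invariance.

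The step I expect to require the most care is not the induction but the well-posedness of each differential inequality along the trajectory. For $i\leq r-1$ the $\psi_i$ must be genuine functions of $x$ alone, with no dependence on $u$; this is exactly where the relative-degree-$r$ hypothesis enters, since $L_gL_f^{k-1}h\equiv0$ for $k<r$ ensures that the derivatives $\dot\psi_{i-1}$ arising for $i<r$ carry no control term, so that only $\psi_r$ depends on $u$. I would also need each $t\mapsto\psi_{i-1}(x(t))$ to be absolutely continuous for the comparison lemma to apply, which follows from the assumed smoothness of $h$ and the $\alpha_i$ together with local Lipschitzness of the closed-loop vector field $f_{\text{cl}}$.
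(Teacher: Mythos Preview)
The paper does not supply its own proof of this theorem; it is quoted verbatim as a known result from \cite{DimosTAC21-hocbf} and invoked later as a black box in the proof of Theorem~\ref{theorem:HO-RaCBF}. So there is nothing in the present paper to compare your argument against.

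That said, your backward-induction-plus-comparison-lemma approach is exactly the standard mechanism used in the HOCBF literature (including the cited references \cite{WeiCDC19,WeiTAC21-hocbf,DimosTAC21-hocbf}) to establish forward invariance of $\cap_{i=1}^r\mathcal{C}_i$, and the argument as you have written it is correct. Your remarks on well-posedness are apt: the relative-degree hypothesis is precisely what keeps $\psi_0,\dots,\psi_{r-1}$ independent of $u$ so that the recursion in \eqref{eq:psi} is consistent, and the regularity assumptions on $h$, the $\alpha_i$, and the closed-loop field are what license the comparison lemma at each layer. One small caveat: you appeal to uniqueness of the comparison ODE $\dot y=-\alpha_r(y)$ via ``local Lipschitzness of the differentiable $\alpha_r$,'' but mere differentiability does not imply local Lipschitzness; either assume the $\alpha_i$ are $C^1$ (as is implicit in the HOCBF literature) or invoke the version of the comparison lemma that requires only continuity of the right-hand side and delivers a lower bound by the \emph{minimal} solution.
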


\section{High Order Robust Adaptive Control Barrier Functions}
This section introduces the concept of a High Order Robust Adaptive Control Barrier Function (HO-RaCBF), which provides a tool to synthesize controllers that guarantee the satisfaction of high relative degree safety constraints for nonlinear systems with parametric uncertainty.
To this end, we now turn our attention to systems of the form
\begin{equation}\label{eq:dyn2}
    \dot{x}=f(x) + Y(x)\theta + g(x)u,
\end{equation}
where $f$ and $g$ are known locally Lipschitz vector fields as in \eqref{eq:dyn}, $Y\,:\,\mathbb{R}^n\rightarrow\mathbb{R}^{n\times p}$ is a known locally Lipschitz regression matrix, and $\theta\in\mathbb{R}^p$ is a constant vector of uncertain parameters. We assume that $f(0)=0$ and $Y(0)=0$ so that 0 is an equilibrium point of the unforced system. Our main objective is to synthesize controllers for \eqref{eq:dyn2} that guarantee safety under the presumption that the function $h$ defining the safe set $\mathcal{C}$ as in \eqref{eq:C} has a high relative degree with respect to \eqref{eq:dyn2}. To facilitate our approach, we make the following assumption on the structure of the uncertainty in \eqref{eq:dyn2}.

\begin{assumption}\label{assumption:rd}
Consider a set $\mathcal{C}$ as in \eqref{eq:C} and an open set $\mathcal{D}$ as in Def. \ref{def:HOCBF}. If $h$ has relative degree $r$ on $\mathcal{R}\subseteq\mathcal{D}$ with respect to \eqref{eq:dyn2} (i.e., if there exist some nonempty $\mathcal{R}\subseteq\mathcal{D}$ such that $L_gL_f^{i-1}h(x)\equiv 0$ for all $1\leq i\leq r-1$ and $L_gL_f^{r-1}h(x)\neq0$ for all $x\in\mathcal{R}$), then there exists $\mathcal{R}'\subseteq\mathcal{D}$ such that $L_YL_f^{i-1}h(x)\equiv 0$ for all $1\leq i\leq r-1$ and $L_YL_f^{r-1}h(x)\neq0$ for all $x\in\mathcal{R}'$.
\end{assumption}

\begin{remark}
    The above assumption requires that the uncertainty in \eqref{eq:dyn2} does not appear before the control when taking higher order derivatives of $h$. Although this may seem restrictive, a variety of physical systems satisfy Assumption \ref{assumption:rd}. Examples include Lagrangian mechanical systems, where $h$ is a function of only the system's configuration. If the above assumption is not made, then the uncertain parameters $\theta$ will appear alongside the control input $u$ in higher order terms, complicating the formulation of the affine constraints on $u$ developed in this paper. {\color{black} From an adaptive control perspective, Assumption \ref{assumption:rd} is similar to the assumption that the uncertain parameters satisfy the matching condition.}
    % It may be possible to address this challenge using second order cone constraints, such as in \cite{SreenathCDC21,TaylorCDC21}, which is left as a direction for future research.
\end{remark}

According to Theorem \ref{theorem:HOCBF}, the forward invariance of $\mathcal{C}$ can be enforced by ensuring that the control input is selected such that the HOCBF condition from Def. \ref{def:HOCBF} is satisfied for all $x\in\cap_{i=1}^{r}\mathcal{C}_i$; however, the presence of model uncertainty in \eqref{eq:dyn2} makes it impossible to directly enforce such a condition. To address this challenge, we aim to take a data-driven approach and update the estimates of the uncertain parameters online using techniques from adaptive control \cite{Krstic,Chowdhary,DixonIJACSP19,DixonACC21} while guaranteeing safety at all times. Following the approach from \cite{DixonIJACSP19}, observe that integrating \eqref{eq:dyn2} over a time interval $[t-\Delta T, t]\subset\mathbb{R}$ using the Fundamental Theorem of Calculus allows \eqref{eq:dyn2} to be equivalently represented as 
\begin{equation}\label{eq:intdyn}
    \Delta x(t)=x(t)-x(t-\Delta T)=\mathcal{F}(t) + \mathcal{Y}(t)\theta + \mathcal{G}(t),
\end{equation}
where $\mathcal{F}(t)\coloneqq \int_{t-\Delta T}^{t}f(x(\tau))d\tau$, $\mathcal{Y}(t)\coloneqq \int_{t-\Delta T}^{t}Y(x(\tau))d\tau$, $\mathcal{G}(t)\coloneqq \int_{t-\Delta T}^{t}g(x(\tau))u(\tau)d\tau$. Now let $\mathcal{H}\coloneqq\{(t_j,x_j,x_{j}^{-},u_j) \}_{j=1}^M$ be a history stack of $M\in\mathbb{N}$ instances of input-output data, where $t_j\in[\Delta T, t]$ denotes a sampling time, $x_j\coloneqq x(t_j)$, $x_j^{-}\coloneqq x(t_j-\Delta t)$, $u_j\coloneqq u(t_j)$, and define\footnote{The function $\Lambda$ is implicitly a function of time as data is added/removed from the history stack $\mathcal{H}$ along the system trajectory.}
\begin{equation}\label{eq:Lambda}
\Lambda(t)\coloneqq\sum_{j=1}^{M}\mathcal{Y}_j^\top\mathcal{Y}_j,\quad \lambda(t)\coloneqq \lambda_{\min}(\Lambda(t)),
\end{equation}
where $\lambda_{\min}(\Lambda)$ denotes the minimum eigenvalue of $\Lambda$. As noted in \cite{DixonIJACSP19,DixonACC21}, the function $\lambda$ is piecewise constant between sampling times and nonnegative since $\Lambda(t)$ is at least positive semidefinite at all times. The following assumption will be used to ensure safety for all possible realizations of the uncertain parameters.

{\color{black}
\begin{assumption}\label{assumption:estimation_error}
    The uncertain parameters $\theta$ belong to a known convex polytope $\Theta\subset\R^p$.
\end{assumption}
The above assumption implies that for any given parameter estimate $\hat{\theta}\in\Theta$ there exists some maximum possible estimation error $\tilde{\vartheta}\in\R^p$ in the sense that $\|\theta-\hat{\theta}\|\leq\|\tilde{\vartheta}\|$ for all $\hat{\theta}\in\Theta$. Given that $\Theta$ is a convex polytope, each component of $\tilde{\vartheta}$ can be computed as 
\begin{equation*}
    \tilde{\vartheta}_i=\max\big\{\big\vert\min_{\theta,\hat{\theta}\in\Theta}\theta_i - \hat{\theta}_i\big\vert,\, \big\vert\max_{\theta,\hat{\theta}\in\Theta}\theta_i - \hat{\theta}_i\big\vert\big\},
\end{equation*}
where $|\cdot|$ denotes aboslute value and $\theta_i$ denotes the $i$th component of $\theta$, which requires solving a pair of linear programs for each parameter.} The following lemma, adapted from \cite{DixonACC21}, provides a verifiable bound on the parameter estimation error.
\begin{lemma}[\cite{DixonACC21}]\label{lemma:theta_bound}
Consider system \eqref{eq:dyn2} and suppose the estimated parameters are updated according to
\begin{equation}\label{eq:theta_dot_cbf}
    \dot{\hat{\theta}}=\gamma\sum_{j= 1}^M\mathcal{Y}_j^\top (\Delta x_j - \mathcal{F}_j - \mathcal{Y}_j\hat{\theta} - \mathcal{G}_j),
\end{equation}
where $\Delta x_j\coloneqq x_j-x_j^-$, $\mathcal{Y}_j\coloneqq \mathcal{Y}(t_j)$, $\mathcal{F}_j\coloneqq \mathcal{F}(t_j)$, $\mathcal{G}_j\coloneqq \mathcal{G}(t_j)$, and $\gamma\in\mathbb{R}_{>0}$ is an adaptation gain. Provided Assumption \ref{assumption:estimation_error} holds and $\hat{\theta}(0)\in\Theta$, then the parameter estimation error $\tilde{\theta}=\theta-\hat{\theta}$ is bounded for all $t\in\mathcal{I}$ as
\begin{equation}\label{eq:theta_bound}
    \|\tilde{\theta}(t)\|\leq\nu(t)\coloneqq \|\tilde{\vartheta}\|e^{-\gamma\int_{0}^t\lambda(\tau)d\tau}.
\end{equation}
\end{lemma}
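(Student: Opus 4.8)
The plan is to exploit the exactness of the integral representation \eqref{eq:intdyn}. Evaluating \eqref{eq:intdyn} at each sample time $t_j$ and rearranging shows that the regression residual satisfies $\Delta x_j - \mathcal{F}_j - \mathcal{G}_j = \mathcal{Y}_j\theta$ \emph{exactly}, with no measurement error, because $\theta$ is a genuine constant and the recorded data is generated by the true dynamics. Substituting this identity into the update law \eqref{eq:theta_dot_cbf} collapses the sum into
\begin{equation*}
    \dot{\hat{\theta}} = \gamma\sum_{j=1}^{M}\mathcal{Y}_j^\top\mathcal{Y}_j(\theta-\hat{\theta}) = \gamma\Lambda(t)\tilde{\theta},
\end{equation*}
using the definition of $\Lambda$ in \eqref{eq:Lambda}. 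Since $\theta$ is constant we have $\dot{\tilde{\theta}} = -\dot{\hat{\theta}}$, yielding the closed, linear, time-varying error dynamics $\dot{\tilde{\theta}} = -\gamma\Lambda(t)\tilde{\theta}$.

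With the error dynamics in hand, I would analyze the candidate $V(\tilde{\theta}) = \tfrac{1}{2}\tilde{\theta}^\top\tilde{\theta} = \tfrac{1}{2}\|\tilde{\theta}\|^2$. Differentiating along the error dynamics gives $\dot{V} = -\gamma\tilde{\theta}^\top\Lambda(t)\tilde{\theta}$. Because $\Lambda(t)$ is symmetric positive semidefinite with smallest eigenvalue $\lambda(t)$, the Rayleigh bound $\tilde{\theta}^\top\Lambda(t)\tilde{\theta}\geq\lambda(t)\|\tilde{\theta}\|^2 = 2\lambda(t)V$ produces the scalar differential inequality $\dot{V}\leq -2\gamma\lambda(t)V$.

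The final step is the comparison lemma \cite{Khalil}: integrating the inequality gives $V(t)\leq V(0)e^{-2\gamma\int_0^t\lambda(\tau)\,d\tau}$, which upon taking square roots becomes $\|\tilde{\theta}(t)\|\leq\|\tilde{\theta}(0)\|e^{-\gamma\int_0^t\lambda(\tau)\,d\tau}$. To replace $\|\tilde{\theta}(0)\|$ with the computable constant $\|\tilde{\vartheta}\|$, I would invoke Assumption \ref{assumption:estimation_error}: since $\theta\in\Theta$ and $\hat{\theta}(0)\in\Theta$, the definition of $\tilde{\vartheta}$ as the worst-case componentwise error over the polytope guarantees $\|\tilde{\theta}(0)\| = \|\theta-\hat{\theta}(0)\|\leq\|\tilde{\vartheta}\|$, establishing \eqref{eq:theta_bound}.

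I expect the main obstacle to be technical rather than conceptual: $\Lambda(t)$, and hence $\lambda(t)$, is only piecewise constant because data is inserted into and removed from the history stack $\mathcal{H}$ along the trajectory, so the error dynamics are discontinuous in $t$ and the comparison lemma cannot be applied naively over all of $\mathcal{I}$. I would resolve this by noting that $\hat{\theta}$, being the integral of a piecewise-continuous right-hand side, remains continuous across sampling instants, so $V$ is continuous and the inequality $\dot{V}\leq -2\gamma\lambda(t)V$ holds on each interval between jumps. Applying the comparison lemma on each such interval and patching the estimates together using continuity of $V$ at the jump times then yields the claimed bound on all of $\mathcal{I}$, with the monotone nonincreasing character of the bound preserved because $\lambda(t)\geq 0$ throughout.
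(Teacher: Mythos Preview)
The paper does not actually prove this lemma; it is stated with attribution to \cite{DixonACC21} and no proof is given in the present paper. Your argument is correct and is precisely the standard one underlying the cited result: exploit the exact integral identity \eqref{eq:intdyn} to rewrite the update law as $\dot{\tilde{\theta}}=-\gamma\Lambda(t)\tilde{\theta}$, differentiate $V=\tfrac{1}{2}\|\tilde{\theta}\|^2$ to obtain $\dot V\leq -2\gamma\lambda(t)V$, apply the comparison lemma, and absorb $\|\tilde{\theta}(0)\|$ into $\|\tilde{\vartheta}\|$ via Assumption~\ref{assumption:estimation_error}. Your handling of the piecewise-constant $\Lambda(t)$ by patching the comparison argument across sampling instants using continuity of $\hat{\theta}$ is the appropriate technical refinement.
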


The above lemma implies that, under the update law in \eqref{eq:theta_dot_cbf}, the parameter estimation error is always bounded by a known value provided the initial parameter estimates are selected such that $\hat{\theta}(0)\in\Theta$. Moreover, if there exists some time $T$ such that $\lambda(t)>0$ for all $t>T$, then the bound in \eqref{eq:theta_bound} implies all estimated parameters exponentially converge to their true values\footnote{See \cite[Ch. 3]{Chowdhary} for a discussion on the relation between traditional persistence of excitation conditions and the milder \emph{finite} excitation conditions leveraged in concurrent learning adaptive control required to achieve $\lambda>0$. We also refer the reader to \cite{ChowdharyACC11} for various algorithms that record data points in $\mathcal{H}$ so as to ensure $\lambda(t)$ is always nondecreasing.\label{footnote:data}}. We now have the {\color{black}necessary} tools in place to introduce a new class of aCBF that allows for the consideration of high relative degree safety constraints.

\begin{definition}\label{def:HO-RaCBF}
    Consider system \eqref{eq:dyn2} and a safe set $\mathcal{C}\subset\mathbb{R}^n$ as in \eqref{eq:C}. Consider a collection of sets $\{\mathcal{C}_i\}_{i=1}^r$ of the form $\mathcal{C}_i\coloneqq\{x\in\mathbb{R}^n\,|\,\psi_{i-1}(x)\geq0\}$, where $\psi_{0}(x)\coloneqq h(x)$ and $\{\psi_{i}\}_{i=1}^r$ are defined as in \eqref{eq:psi}. The sufficiently smooth function $h$ is said to be a \emph{high order robust adaptive control barrier function} (HO-RaCBF) of order $r$ for \eqref{eq:dyn2} on an open set $\mathcal{D}\supset\cap_{i=1}^r\mathcal{C}_i$ if $h$ has relative degree $r$ on some nonempty $\mathcal{R}\subseteq\mathcal{D}$ and there exists a suitable choice of $\{\alpha_{i}\}_{i=1}^r$ as in \eqref{eq:psi} such that for all $x\in\mathcal{D}$, $\theta\in\Theta$, and $t\in\mathcal{I}$
    \begin{equation}\label{eq:HO-RaCBF}
    \begin{aligned}
        \sup_{u\in\mathcal{U}}\left\{L_{f}\psi_{r-1}(x) + L_Y\psi_{r-1}(x)\theta + L_g\psi_{r-1}(x)u \right\} \\
        \geq -\alpha_r(\psi_{r-1}(x)) + \|L_Y\psi_{r-1}(x)\|\nu(t),
    \end{aligned}
    \end{equation}
    where $\nu$ is defined as in \eqref{eq:theta_bound}.
\end{definition}

Intuitively, the above condition adds a buffer to the original HOCBF condition from Def. \ref{def:HOCBF} to account for all possible realizations of the uncertain parameters given the set $\Theta$. This buffer may shrink over time as the uncertain parameters are identified and exponentially converges to zero in the limit as $t\rightarrow\infty$ provided there exists a time $T$ for which $\lambda(t)>0$ for all $t\geq T$. Furthermore, Def. \ref{def:HO-RaCBF} allows us to consider the set of all control values satisfying \eqref{eq:HO-RaCBF} as 
\begin{equation}\label{eq:Kcbf}
\begin{aligned}
    \hat{K}_{\text{cbf}}(x,\theta,t)\coloneqq & \{u\in\mathcal{U}\,|\, L_{f}\psi_{r-1}(x) + L_Y\psi_{r-1}(x)\theta \\ & + L_g\psi_{r-1}(x)u  +\alpha_r(\psi_{r-1}(x)) \\ &- \|L_Y\psi_{r-1}(x)\|\nu(t) \geq0\}.
\end{aligned}
\end{equation}
The following theorem shows that any well-posed control law $u=k(x,\hat{\theta},t)$ satisfying $k(x,\hat{\theta},t)\in \hat{K}_{\text{cbf}}(x,\hat{\theta},t)$ renders $\cap_{i=1}^r\mathcal{C}_i$ forward invariant for \eqref{eq:dyn2}.

\begin{theorem}\label{theorem:HO-RaCBF}
    Consider system \eqref{eq:dyn2}, a set $\mathcal{C}$ defined by a sufficiently smooth function $h$ as in \eqref{eq:C}, and let $h$ be a HO-RaCBF on $\mathcal{D}$. Provided Assumptions \ref{assumption:rd}-\ref{assumption:estimation_error} hold and the estimated parameters are updated according to \eqref{eq:theta_dot_cbf}, then any controller $u=k(x,\hat{\theta},t)$ locally Lipschitz in $(x,\hat{\theta})$ and piecewise continuous in $t$ satisfying $k(x,\hat{\theta},t)\in \hat{K}_{\text{cbf}}(x,\hat{\theta},t)$ renders $\cap_{i=1}^r\mathcal{C}_i$ forward invariant for \eqref{eq:dyn2}. 
\end{theorem}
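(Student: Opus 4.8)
The plan is to reduce the statement to the standard high order argument of Theorem~\ref{theorem:HOCBF} by showing that, \emph{despite} the parametric uncertainty, the \emph{true} quantity $\psi_r$ is rendered nonnegative along every closed-loop trajectory. First I would record two structural facts. By Assumption~\ref{assumption:rd}, neither the control $u$ nor the uncertain term $Y(x)\theta$ enters the first $r-1$ derivatives of $h$, so $\psi_0,\dots,\psi_{r-1}$ (and hence the sets $\mathcal{C}_1,\dots,\mathcal{C}_r$) are functions of $x$ alone and are well defined without knowledge of $\theta$. Consequently, differentiating $\psi_{r-1}$ along \eqref{eq:dyn2} gives exactly $\dot\psi_{r-1}=L_f\psi_{r-1}(x)+L_Y\psi_{r-1}(x)\theta+L_g\psi_{r-1}(x)u$, so the true value of $\psi_r$ is $L_f\psi_{r-1}(x)+L_Y\psi_{r-1}(x)\theta+L_g\psi_{r-1}(x)u+\alpha_r(\psi_{r-1}(x))$. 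I would also note that the closed loop is well posed: since $k$ is locally Lipschitz in $(x,\hat\theta)$ and piecewise continuous in $t$, and \eqref{eq:theta_dot_cbf} is affine in $\hat\theta$ with piecewise-constant (in $t$) history-stack data, the augmented vector field in $(x,\hat\theta)$ is locally Lipschitz in the state and piecewise continuous in time, guaranteeing a unique solution on the maximal interval $\mathcal{I}$.

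The crux is the robustness estimate. Fix $t\in\mathcal{I}$ and write $\tilde\theta=\theta-\hat\theta$. Because $k(x,\hat\theta,t)\in\hat{K}_{\text{cbf}}(x,\hat\theta,t)$, the defining inequality of \eqref{eq:Kcbf} holds with $\hat\theta$ in place of $\theta$. Writing $\theta=\hat\theta+\tilde\theta$ in the expression for the true $\psi_r$ and substituting this inequality yields $\psi_r\ge \|L_Y\psi_{r-1}(x)\|\nu(t)+L_Y\psi_{r-1}(x)\tilde\theta$. The Cauchy--Schwarz inequality gives $L_Y\psi_{r-1}(x)\tilde\theta\ge -\|L_Y\psi_{r-1}(x)\|\,\|\tilde\theta\|$, and Lemma~\ref{lemma:theta_bound} (whose hypotheses hold, since $\hat\theta(0)\in\Theta$ and \eqref{eq:theta_dot_cbf} is used) bounds $\|\tilde\theta(t)\|\le\nu(t)$. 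Hence the buffer term $\|L_Y\psi_{r-1}(x)\|\nu(t)$ exactly dominates the worst-case uncertainty contribution, and $\psi_r\ge0$ along the trajectory for all $t\in\mathcal{I}$.

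With $\psi_r\ge0$ established along the realized trajectory, the situation is identical to that of Theorem~\ref{theorem:HOCBF}: the inequality $\psi_r\ge0$ is precisely $\dot\psi_{r-1}(x(t))\ge-\alpha_r(\psi_{r-1}(x(t)))$, so a comparison-lemma argument shows $\psi_{r-1}(x(0))\ge0\implies\psi_{r-1}(x(t))\ge0$, i.e.\ $\mathcal{C}_r$ is forward invariant. Iterating this comparison argument downward through $\psi_{r-1},\dots,\psi_0$ using the cascade structure in \eqref{eq:psi} yields forward invariance of each $\mathcal{C}_i$, and therefore of $\cap_{i=1}^r\mathcal{C}_i$; I would invoke Theorem~\ref{theorem:HOCBF} to package this final step. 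I expect the main obstacle to be conceptual rather than computational: one must track that the HO-RaCBF constraint is written in terms of the estimate $\hat\theta$ while forward invariance is a property of the true dynamics driven by $\theta$, so the whole argument hinges on correctly matching the buffer $\|L_Y\psi_{r-1}\|\nu$ to the Cauchy--Schwarz bound on $L_Y\psi_{r-1}\tilde\theta$ and on verifying that Lemma~\ref{lemma:theta_bound} applies throughout $\mathcal{I}$. A secondary point to handle cleanly is that Theorem~\ref{theorem:HOCBF} is stated for a time-invariant feedback $u=k(x)$, whereas here the established inequality $\psi_r\ge0$ holds pointwise along a trajectory generated by a time- and $\hat\theta$-dependent law; since the comparison argument only uses the sign of $\dot\psi_{r-1}+\alpha_r(\psi_{r-1})$ along that trajectory, this causes no difficulty.
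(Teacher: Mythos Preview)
Your proposal is correct and follows essentially the same approach as the paper: split $\theta=\hat\theta+\tilde\theta$ in the expression for $\psi_r$, apply Cauchy--Schwarz to $L_Y\psi_{r-1}\tilde\theta$, invoke Lemma~\ref{lemma:theta_bound} to bound $\|\tilde\theta\|\le\nu$, conclude $\psi_r\ge0$, and then appeal to Theorem~\ref{theorem:HOCBF}. Your treatment is in fact slightly more careful than the paper's in explicitly justifying well-posedness and in noting the harmless mismatch between the time-invariant statement of Theorem~\ref{theorem:HOCBF} and the time- and $\hat\theta$-dependent controller used here.
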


\begin{proof}
    According to Theorem \ref{theorem:HOCBF}, to guarantee forward invariance of $\cap_{i=1}^r\mathcal{C}_i$, it is sufficient to show that for each $x\in\cap_{i=1}^r\mathcal{C}_i$ the input is selected such as $\psi_r(x,u)\geq0$. For the dynamics in \eqref{eq:dyn2}, if Assumption \ref{assumption:rd} holds, then $\psi_r(x,u)=L_{f}\psi_{r-1}(x) + L_{Y}\psi_{r-1}(x)\theta + L_g\psi_{r-1}(x)u + \alpha_r(\psi_{r-1}(x))$. Hence, it is our aim to show that any $u=k(x,\hat{\theta},t)$ satisfying $k(x,\hat{\theta},t)\in \hat{K}_{\text{cbf}}(x,\hat{\theta},t)$ satisfies $\psi_r(x,k(x,\hat{\theta},t))\geq0$ for all $x\in\cap_{i=1}^r\mathcal{C}_i$, all $\hat{\theta}\in\Theta$, and all $t\in\mathcal{I}$. To this end, observe that under control $u=k(x,\hat{\theta},t)\in \hat{K}_{\text{cbf}}(x,\hat{\theta},t)$ (omitting functional arguments for ease of presentation)
    \[
    \begin{aligned}
    \psi_r
    =& L_{f}\psi_{r-1} + L_{Y}\psi_{r-1}\hat{\theta} + L_{Y}\psi_{r-1}\tilde{\theta} \\ & +  L_g\psi_{r-1}k + \alpha_r(\psi_{r-1}) \\
    \geq & L_{f}\psi_{r-1} + L_{Y}\psi_{r-1}\hat{\theta}  +  L_g\psi_{r-1}k \\ &+ \alpha_r(\psi_{r-1}) - \|L_Y\psi_{r-1}\|\|\tilde{\theta}\| \\
    \geq & L_{f}\psi_{r-1} + L_{Y}\psi_{r-1}\hat{\theta}  +  L_g\psi_{r-1}k \\ &+ \alpha_r(\psi_{r-1}) - \|L_Y\psi_{r-1}\|\nu \\ 
    \geq &  0, 
    \end{aligned}
    \]
    for all $x\in\cap_{i=1}^r\mathcal{C}_i$, all $\hat{\theta}\in\Theta$, and all $t\in\mathcal{I}$.
    In the above, the first inequality follows from the fact that $L_{Y}\psi_{r-1}(x)\tilde{\theta}\geq - \|L_{Y}\psi_{r-1}(x)\|\|\tilde{\theta}\|$, the second from the bound in \eqref{eq:theta_bound}, and the third from \eqref{eq:Kcbf}. Since $\psi_r(x,k(x,\hat{\theta},t))\geq0$ holds for all $x\in\cap_{i=1}^r\mathcal{C}_i$, all $\hat{\theta}\in\Theta$, and all $t\in\mathcal{I}$, it follows from Theorem \ref{theorem:HOCBF} that $\cap_{i=1}^r\mathcal{C}_i$ is forward invariant for the closed-loop system, as desired.
\end{proof}

Definition \ref{def:HO-RaCBF} and Theorem \ref{theorem:HO-RaCBF} generalize the ideas introduced in \cite{DixonACC21} to constraints with high relative degree, thereby facilitating the application of such ideas to more complex systems and safe sets. Although the results of \cite{DixonACC21} apply to safe sets defined by multiple barrier functions, whereas ours apply only to those defined by a single barrier function, there exist various approaches in the CBF literature \cite{LarsLCSS19,EgerstedtLCSS17} to formally combine multiple barrier functions\footnote{In practice, it is common to simply include multiple CBF-based constraints in a quadratic program such as the one proposed in \eqref{eq:HO-RaCBF-QP}, which, as we demonstrate empirically in Sec. \ref{sec:sim}, allows one to consider safe sets defined by multiple barrier functions.} using smooth approximations of min/max operators \cite{LarsLCSS19} or nonsmooth analysis \cite{EgerstedtLCSS17}. We now show that if the conditions of Theorem \ref{theorem:HO-RaCBF} hold on $\mathcal{D}$ and $x(0)\in\mathcal{D}\backslash\cap_{i=1}^r\mathcal{C}_i$, then any controller satisfying $k(x,\hat{\theta},t)\in \hat{K}_{\text{cbf}}(x,\hat{\theta},t)$ also guarantees asymptotic stability of $\mathcal{C}$, which ensures the developed controller is robust to perturbations \cite{AmesADHS15}.

\begin{corollary}\label{corollary:HO-RaCBF}
    Let the conditions of Theorem \ref{theorem:HO-RaCBF} hold and suppose that $x(0)\in\mathcal{D}\backslash\cap_{i=1}^r\mathcal{C}_i$. Provided $u=k(x,\hat{\theta},t)\in \hat{K}_{\text{cbf}}(x,\hat{\theta},t)$ renders the closed-loop dynamics \eqref{eq:dyn2} forward complete, then the set $\cap_{i=1}^r\mathcal{C}_i$ is asymptotically stable for the closed-loop system.
\end{corollary}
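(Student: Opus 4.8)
The plan is to exploit the fact that, along any closed-loop solution remaining in $\mathcal{D}$, the controller forces $\psi_r(x,k)\geq0$, which is precisely a zeroing-type differential inequality on $\psi_{r-1}$, and then to propagate the resulting convergence down the chain $\psi_{r-1},\psi_{r-2},\dots,\psi_0$ through a cascade of comparison-lemma arguments. First I would observe that the computation in the proof of Theorem \ref{theorem:HO-RaCBF} in fact establishes $\psi_r(x,k(x,\hat{\theta},t))\geq0$ for \emph{every} $x\in\mathcal{D}$, not merely on $\cap_{i=1}^r\mathcal{C}_i$, since the HO-RaCBF inequality of Def. \ref{def:HO-RaCBF} and the estimation-error bound of Lemma \ref{lemma:theta_bound} both hold on all of $\mathcal{D}$. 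Recalling that $\psi_r=\dot{\psi}_{r-1}+\alpha_r(\psi_{r-1})$, this yields $\dot{\psi}_{r-1}(x(t))\geq-\alpha_r(\psi_{r-1}(x(t)))$ along any solution that stays in $\mathcal{D}$; forward completeness guarantees the solution exists for all $t\geq0$, and I would argue (or assume, as is standard for such domain-restricted results) that it remains in $\mathcal{D}$ so this inequality is valid for all time.

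Next I would apply the comparison lemma to $\psi_{r-1}$. Since $\alpha_r$ is an extended class $\mathcal{K}$ function, the scalar system $\dot{y}=-\alpha_r(y)$ has $y=0$ as an asymptotically stable equilibrium, because $y<0\Rightarrow\dot{y}>0$ and $y>0\Rightarrow\dot{y}<0$. Comparison then gives $\psi_{r-1}(x(t))\geq y(t)$ with $y(0)=\psi_{r-1}(x(0))$, so $\liminf_{t\to\infty}\psi_{r-1}(x(t))\geq0$, i.e., $x(t)$ is attracted to $\mathcal{C}_r=\{\psi_{r-1}\geq0\}$. This is the base case of a downward induction on $k$ from $r-1$ to $0$: assuming $\liminf_{t\to\infty}\psi_k(x(t))\geq0$, I would use $\psi_k=\dot{\psi}_{k-1}+\alpha_k(\psi_{k-1})$ to write $\dot{\psi}_{k-1}\geq-\varepsilon-\alpha_k(\psi_{k-1})$ for all $t$ beyond some $T_\varepsilon$ at which $\psi_k\geq-\varepsilon$, compare against $\dot{y}=-\varepsilon-\alpha_k(y)$ whose asymptotically stable equilibrium is $\alpha_k^{-1}(-\varepsilon)$, and conclude $\liminf_{t\to\infty}\psi_{k-1}(x(t))\geq\alpha_k^{-1}(-\varepsilon)$. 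Letting $\varepsilon\downarrow0$ and invoking $\alpha_k^{-1}(0)=0$ together with continuity of $\alpha_k^{-1}$ then gives $\liminf_{t\to\infty}\psi_{k-1}(x(t))\geq0$.

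Running the induction to $k=0$ shows $\liminf_{t\to\infty}\psi_j(x(t))\geq0$ for every $j\in\{0,\dots,r-1\}$, so the trajectory is attracted to $\cap_{i=1}^r\mathcal{C}_i$. Stability of the set would follow from the same comparison bounds, which show that when $\psi_{r-1}(x(0))$ is negative it cannot fall below its initial value (and each $\psi_{k-1}$ is correspondingly confined), so solutions beginning near $\cap_{i=1}^r\mathcal{C}_i$ remain near it; combined with attractivity this yields asymptotic stability, consistent with the forward invariance already established in Theorem \ref{theorem:HO-RaCBF}.

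The main obstacle I anticipate is the cascade step itself: each link in the chain delivers only \emph{asymptotic} nonnegativity of $\psi_k$, so the ``input'' driving the $\psi_{k-1}$ dynamics is merely eventually and approximately nonnegative, which forces the $\varepsilon$-perturbed comparison argument and a careful interchange of limits at every stage. A secondary technical point is justifying that the solution remains within the domain $\mathcal{D}$ on which the HO-RaCBF inequality—and hence the differential inequality driving the entire argument—is guaranteed to hold.
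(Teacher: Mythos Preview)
Your approach is essentially correct for attractivity, but it takes a genuinely different route from the paper. The paper's proof is a two-line reduction: it observes (as you do) that the computation in Theorem~\ref{theorem:HO-RaCBF} in fact shows $\psi_r(x,k(x,\hat{\theta},t))\geq0$ for all $x\in\mathcal{D}$, so the controller belongs to the \emph{nominal} HOCBF set $K_{\text{cbf}}(x)$ from Theorem~\ref{theorem:HOCBF}; it then simply invokes \cite[Prop.~3 and Rem.~2]{DimosTAC21-hocbf}, which already establishes asymptotic stability of $\cap_{i=1}^r\mathcal{C}_i$ under any controller in $K_{\text{cbf}}(x)$. In other words, the paper outsources the cascade argument you develop to an existing result in the HOCBF literature, whereas you are reproving that result from scratch via iterated comparison lemmas.

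What each approach buys: the paper's route is short and modular---once $\psi_r\geq0$ is secured on $\mathcal{D}$, everything is inherited from the known-dynamics theory, including the set-stability part you flag as delicate. Your route is self-contained and makes transparent \emph{why} the zeroing structure propagates down the chain, but it forces you to handle two technicalities that the citation absorbs: the $\varepsilon$-perturbed comparison at each cascade step (which you treat correctly), and the conversion of ``all $\psi_j$ eventually nonnegative'' into genuine Lyapunov set-stability in $\mathbb{R}^n$. Your stability sketch is the weakest link---bounding each $\psi_{k-1}$ below by compositions of $\alpha_k^{-1}$ applied to initial data is the right idea, but relating smallness of the $\psi_j$ to Euclidean distance from $\cap_{i=1}^r\mathcal{C}_i$ requires regularity of the defining functions near the boundary, which is exactly what the cited proposition packages.
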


\begin{proof}
    It was shown in Theorem \ref{theorem:HO-RaCBF} that the proposed controller $k(x,\hat{\theta},t)\in \hat{K}_{\text{cbf}}(x,\hat{\theta},t)$ satisfies $\psi_r(x,k(x,\hat{\theta},t))\geq0$ for all $x\in\mathcal{D}$, $\hat{\theta}\in\Theta$ and $t\in\mathcal{I}$. Hence, by definition, $k(x,\hat{\theta},t)\in K_{\text{cbf}}(x)$ for all $x\in\mathcal{D}$, $\hat{\theta}\in\Theta$, and $t\in\mathcal{I}$. It then follows from \cite[Prop. 3 and Rem. 2]{DimosTAC21-hocbf} that $\cap_{i=1}^r\mathcal{C}_i$ is asymptotically stable for the closed-loop system.
\end{proof}

A controller satisfying the conditions of Theorem \ref{theorem:HO-RaCBF} can be computed through the use of quadratic programming (QP). Specifically, given an estimate of the uncertain parameters $\hat{\theta}$ and  a nominal feedback control policy $k_d(x,\hat{\theta},t)$, a minimally invasive safe controller can be computed using the following HO-RaCBF-QP
\begin{equation}\label{eq:HO-RaCBF-QP}
    \begin{aligned}
    \min_{u\in\mathcal{U}} &\quad \tfrac{1}{2}\|u-k_d(x,\hat{\theta},t)\|^2 \\ 
     \text{s.t.} & \quad L_{f}\psi_{r-1}(x) + L_Y\psi_{r-1}(x)\hat{\theta} + L_{g}\psi_{r-1}(x)u \\ & \geq -\alpha_{r}(\psi_{r-1}(x)) + \|L_Y\psi_{r-1}(x)\|\nu(t),
    \end{aligned}
\end{equation}
which enforces the conditions of Theorem \ref{theorem:HO-RaCBF} provided the resulting controller is Lipschitz continuous and $h$ is a valid HO-RaCBF with $u\in\mathcal{U}$. That is, the above QP \eqref{eq:HO-RaCBF-QP} allows the nominal policy $k_d$ to be executed on \eqref{eq:dyn2} if $k_d$ can be formally verified as safe and intervenes in a minimally invasive fashion to guarantee safety only if $k_d$ cannot be certified as safe. We illustrate in the following section how one can leverage the same history stack used to reduce uncertainty in the parameter estimates to synthesize a desired policy $k_d$ with exponential stability guarantees. 

\section{Exponentially Stabilizing Adaptive Control Lyapunov Functions}
In this section we introduce the concept of an exponentially stabilizing adaptive control Lyapunov function (ES-aCLF) as a tool to exponentially stabilize uncertain nonlinear systems in the presence of parametric uncertainty.

\begin{definition}\label{def:ES-aCLF}
A continuously differentiable positive definite function $V\,:\,\mathbb{R}^n\rightarrow\mathbb{R}_{\geq0}$ is said to be an \emph{exponentially stabilizing adaptive control Lyapunov function} (ES-aCLF) for \eqref{eq:dyn2} if there exist positive constants $c_1,c_2,c_3\in\mathbb{R}_{>0}$ such that for all $x\in\mathbb{R}^n$ and $\theta\in\mathbb{R}^p$
\begin{subequations}
    \begin{equation}
        c_1\|x\|^2\leq V(x)\leq c_2\|x\|^2,
    \end{equation}
    \begin{equation}\label{eq:aclf}
        \inf_{u\in\mathcal{U}}\{L_{f}V(x) + L_YV(x)\theta + L_gV(x)u \}\leq -c_3V(x).
    \end{equation}
\end{subequations}
\end{definition}
Given the above definition, let 
\begin{equation}\label{eq:Kclf}
\begin{aligned}
    K_{\text{clf}}(x,\theta)\coloneqq & \{u\in\mathcal{U}\,|\ L_{f}V(x)  \\ &+  L_YV(x)\theta + L_gV(x)u\leq -c_3V(x)\}, 
\end{aligned}
\end{equation}
denote the point-wise set of all control values satisfying \eqref{eq:aclf}. The following lemma provides a parameter update law that can be combined with any locally Lipschitz control policy satisfying $k(x,\hat{\theta})\in K_{\text{clf}}(x,\hat{\theta})$ to guarantee stability. 

\begin{lemma}\label{lemma:aclf}
    Consider system \eqref{eq:dyn2}. Let $V$ be an ES-aCLF as in Def. \ref{def:ES-aCLF} and define $z\coloneqq\begin{bmatrix} x^\top & \tilde{\theta}^\top \end{bmatrix}^\top$. Provided the estimates of the unknown parameters are updated according to
    \begin{equation}\label{eq:theta_dot}
        \dot{\hat{\theta}}=\Gamma L_YV(x)^\top + \gamma\Gamma\sum_{j=1}^M\mathcal{Y}_j^\top(\Delta x_j - \mathcal{F}_j -  \mathcal{Y}_j\hat{\theta} - \mathcal{G}_j),
    \end{equation}
    where $\Gamma\in\mathbb{R}^{p\times p}$ is a positive definite gain matrix and $\gamma\in\mathbb{R}_{>0}$ is a user-defined adaptation gain, then any locally Lipschitz controller $u=k(x,\hat{\theta})$ satisfying $k(x,\hat{\theta})\in K_{clf}(x,\hat{\theta})$
    ensures that the composite system trajectory $t\mapsto z(t)$ remains bounded in the sense that for all $t\in[0,\infty)$
    \begin{equation}\label{eq:z_bound}
        \|z(t)\|\leq\sqrt{\tfrac{\eta_2}{\eta_1}}\|z(0)\|,
    \end{equation}
    where $\eta_1\coloneqq\min\{c_1,\tfrac{1}{2}\lambda_{\min}(\Gamma^{-1})\}$ and $\,\eta_2\coloneqq\max\{c_2,\tfrac{1}{2}\lambda_{\max}(\Gamma^{-1})\}$ are positive constants. Moreover
    \[
    \lim_{t\rightarrow\infty}x(t)=0. 
    \]
\end{lemma}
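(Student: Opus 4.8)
\emph{Proof plan.} The plan is to study the composite $(x,\tilde{\theta})$ dynamics through the augmented Lyapunov function
$V_c(x,\tilde{\theta})\coloneqq V(x) + \tfrac{1}{2}\tilde{\theta}^\top\Gamma^{-1}\tilde{\theta}$, which adds to the ES-aCLF a quadratic penalty on the parameter error in the classical adaptive manner \cite[Ch.~4.1]{Krstic}. First I would differentiate $V_c$ along the closed-loop system \eqref{eq:dyn2}. Since $\theta$ is constant we have $\dot{\tilde{\theta}}=-\dot{\hat{\theta}}$, so $\dot{V}_c=\dot{V}-\tilde{\theta}^\top\Gamma^{-1}\dot{\hat{\theta}}$. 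Writing $\dot{V}=L_fV(x)+L_YV(x)\theta+L_gV(x)k$ and splitting $\theta=\hat{\theta}+\tilde{\theta}$, the controller condition $k\in K_{\text{clf}}(x,\hat{\theta})$ from \eqref{eq:Kclf} bounds the $\hat{\theta}$-dependent terms by $-c_3V(x)$, leaving $\dot{V}\leq -c_3V(x)+L_YV(x)\tilde{\theta}$.

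The crux of the argument is the pair of cancellations that occur once the update law \eqref{eq:theta_dot} is substituted for $\dot{\hat{\theta}}$. The gradient term $\Gamma L_YV(x)^\top$ in \eqref{eq:theta_dot}, after multiplication by $\tilde{\theta}^\top\Gamma^{-1}$, produces exactly $\tilde{\theta}^\top L_YV(x)^\top=L_YV(x)\tilde{\theta}$, which cancels the residual cross term inherited from $\dot{V}$. For the concurrent-learning term I would invoke the integral identity \eqref{eq:intdyn} evaluated at each sample $t_j$, giving $\Delta x_j-\mathcal{F}_j-\mathcal{G}_j=\mathcal{Y}_j\theta$ and hence $\Delta x_j-\mathcal{F}_j-\mathcal{Y}_j\hat{\theta}-\mathcal{G}_j=\mathcal{Y}_j\tilde{\theta}$. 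Summing over $j$ and recalling $\Lambda=\sum_j\mathcal{Y}_j^\top\mathcal{Y}_j$ from \eqref{eq:Lambda} collapses this contribution to $-\gamma\tilde{\theta}^\top\Lambda\tilde{\theta}\leq 0$, since $\Lambda$ is positive semidefinite. Combining these yields the clean estimate $\dot{V}_c\leq -c_3V(x)-\gamma\tilde{\theta}^\top\Lambda\tilde{\theta}\leq 0$.

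The bound \eqref{eq:z_bound} then follows by a standard sandwich argument: the quadratic bounds on $V$ together with $\tfrac{1}{2}\lambda_{\min}(\Gamma^{-1})\|\tilde{\theta}\|^2\leq\tfrac{1}{2}\tilde{\theta}^\top\Gamma^{-1}\tilde{\theta}\leq\tfrac{1}{2}\lambda_{\max}(\Gamma^{-1})\|\tilde{\theta}\|^2$ place $V_c$ between $\eta_1\|z\|^2$ and $\eta_2\|z\|^2$; since $\dot{V}_c\leq 0$ forces $V_c(t)\leq V_c(0)$, rearranging the inequalities gives $\|z(t)\|\leq\sqrt{\eta_2/\eta_1}\,\|z(0)\|$. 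This a priori boundedness of $z$ also certifies forward completeness, as the trajectory cannot escape in finite time, so the estimate holds for all $t\in[0,\infty)$.

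For the convergence claim I would appeal to Barbalat's lemma. Integrating $\dot{V}_c\leq -c_3V(x)\leq -c_3c_1\|x\|^2$ and using that $V_c$ is bounded below shows $\int_0^\infty\|x(\tau)\|^2\,d\tau<\infty$, i.e.\ $x\in L_2$. Boundedness of $z$ (hence of $x$ and $\hat{\theta}$), together with local Lipschitzness of $f$, $Y$, $g$, and $k$, bounds $\dot{x}$, so $t\mapsto V(x(t))$ has bounded derivative and is therefore uniformly continuous; Barbalat's lemma then yields $V(x(t))\to 0$, and the lower bound $V(x)\geq c_1\|x\|^2$ gives $x(t)\to 0$. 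I expect the main obstacle to lie in this convergence step rather than in the Lyapunov computation: one must argue uniform continuity of $V(x(\cdot))$ carefully from the boundedness of $z$, while keeping in mind that the history-stack quantities in \eqref{eq:theta_dot} render $\dot{\hat{\theta}}$, and thus the closed-loop vector field, only piecewise continuous in $t$ — which is nonetheless sufficient for both the comparison bound and the Barbalat argument.
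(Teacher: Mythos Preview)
Your proposal is correct and follows essentially the same approach as the paper: both use the composite Lyapunov function $V(x)+\tfrac{1}{2}\tilde{\theta}^\top\Gamma^{-1}\tilde{\theta}$, exploit the cancellation from the gradient term in \eqref{eq:theta_dot} together with the identity $\Delta x_j-\mathcal{F}_j-\mathcal{Y}_j\hat{\theta}-\mathcal{G}_j=\mathcal{Y}_j\tilde{\theta}$ to obtain $\dot V_c\leq -c_3V(x)$, and then deduce \eqref{eq:z_bound} from the quadratic sandwich on $V_c$. The only cosmetic difference is that the paper invokes the LaSalle--Yoshizawa theorem \cite[Thm.~A.8]{Krstic} for the convergence $x(t)\to 0$, whereas you unpack the same conclusion via Barbalat's lemma directly.
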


\begin{proof}
    Consider the Lyapunov function candidate $V_a(z)\coloneqq V(x) + \frac{1}{2}\tilde{\theta}^\top\Gamma^{-1}\tilde{\theta}$, which can be bounded for all $z\in\mathbb{R}^{n+p}$ as $\eta_1\|z\|^2\leq V_a(z)\leq\eta_2\|z\|^2$. Taking the derivative of $V_a$ along the composite system trajectory yields
    \begin{equation}\label{eq:Vadot}
    \begin{aligned}
    \dot{V}_a= & L_fV(x) + L_YV(x)\theta   + L_gV(x)u - \tilde{\theta}^\top L_YV(x)^\top  \\
    & - \gamma\tilde{\theta}^\top\sum_{j=1}^M\mathcal{Y}_j^\top(\Delta x_j - \mathcal{F}_j -  \mathcal{Y}_j\hat{\theta} - \mathcal{G}_j)\\
    = & L_fV(x) + L_YV(x)\hat{\theta} + L_gV(x)u -\gamma\tilde{\theta}^\top\Lambda(t)\tilde{\theta},
    \end{aligned}
    \end{equation}
    where $\Lambda$ is from \eqref{eq:Lambda}. Using the fact that $\Lambda(t)$ is at least positive semi-definite for all time implies $\dot{V}_a$ can be bounded as $\dot{V}_a\leq L_fV(x) + L_YV(x)\hat{\theta} + L_gV(x)u$. Choosing $u=k(x,\hat{\theta})\in K_{\text{clf}}(x,\hat{\theta})$ and the hypothesis that $V$ is a valid ES-aCLF allows $\dot{V}_a$ to be further bounded as $\dot{V}_a\leq-c_3V(x)\leq0$, revealing that $\dot{V}_a$ is negative semi-definite. Hence, $V_a$ is nonincreasing and $V_a(z(t))\leq V_a(z(0))$ for all $t\in[0,\infty)$, which can be combined with the bounds on $V_a$ to yield \eqref{eq:z_bound}. Since $V$ is continuous and $\dot{V}_a\leq-c_3V(x)\leq0$, it follows from the LaSalle-Yoshizawa theorem \cite[Thm. A.8]{Krstic} that $\lim_{t\rightarrow\infty}c_3V(x)=0$, implying $\lim_{t\rightarrow\infty}x(t)=0$.
\end{proof}

The following theorem shows that if sufficiently rich data is collected along the system trajectory, then $x(t)$ and $\tilde{\theta}(t)$ both exponentially converge to the origin.

\begin{theorem}\label{theorem:ES-aCLF}
Under the assumption that the conditions of Lemma \ref{lemma:aclf} hold, suppose that there exists a time $T\in\mathbb{R}_{\geq0}$ and a positive constant $\underline{\lambda}\in\mathbb{R}_{>0}$ such that $\lambda(t)\geq\underline{\lambda}$ for all $t\in[T,\infty)$. Then, for all $t\in[0,T)$, $z(t)$ is bounded in the sense that \eqref{eq:z_bound} holds. Furthermore, for all $t\in[T,\infty)$, $z(t)$ exponentially converges to the origin at a rate proportional to $\eta_3\coloneqq \min\{\gamma\underline{\lambda},c_1c_3\}$ in the sense that for all $t\in[T,\infty)$
\begin{equation}\label{eq:z_exp_bound}
    \|z(t)\|\leq\sqrt{\tfrac{\eta_2}{\eta_1}}\|z(T)\|e^{-\frac{\eta_3}{2\eta_2}(t-T)}.
\end{equation}
\end{theorem}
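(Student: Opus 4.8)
The plan is to split the time axis at $T$ and handle the two regimes separately. On $[0,T)$ there is nothing new to establish: the hypotheses of Lemma \ref{lemma:aclf} are in force throughout, so the boundedness estimate \eqref{eq:z_bound} holds on all of $[0,\infty)$ and in particular on $[0,T)$. The substance of the theorem lies on $[T,\infty)$, where the persistent-excitation-type hypothesis $\lambda(t)\geq\underline{\lambda}$ upgrades mere boundedness to exponential decay.

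The key observation is that the proof of Lemma \ref{lemma:aclf} discarded the term $-\gamma\tilde{\theta}^\top\Lambda(t)\tilde{\theta}$ by only invoking $\Lambda(t)\succeq0$; here I would instead retain it and exploit its now strictly negative contribution. Beginning from the identity \eqref{eq:Vadot} for $\dot{V}_a$, substituting a controller $k(x,\hat{\theta})\in K_{\text{clf}}(x,\hat{\theta})$, and using the ES-aCLF inequality \eqref{eq:aclf}, I would obtain for all $t\in[T,\infty)$
\[
\dot{V}_a\leq -c_3V(x)-\gamma\tilde{\theta}^\top\Lambda(t)\tilde{\theta}\leq -c_3V(x)-\gamma\underline{\lambda}\|\tilde{\theta}\|^2,
\]
where the second inequality uses $\tilde{\theta}^\top\Lambda(t)\tilde{\theta}\geq\lambda(t)\|\tilde{\theta}\|^2\geq\underline{\lambda}\|\tilde{\theta}\|^2$ for $t\geq T$.

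Next I would collapse the two negative terms into a single quadratic in $z$. The lower sandwich bound $V(x)\geq c_1\|x\|^2$ gives $\dot{V}_a\leq -c_1c_3\|x\|^2-\gamma\underline{\lambda}\|\tilde{\theta}\|^2$, and since $\|z\|^2=\|x\|^2+\|\tilde{\theta}\|^2$ this is bounded above by $-\eta_3\|z\|^2$ with $\eta_3=\min\{\gamma\underline{\lambda},c_1c_3\}$ exactly as in the statement. Applying the upper bound $V_a(z)\leq\eta_2\|z\|^2$ then produces the linear differential inequality $\dot{V}_a\leq -\tfrac{\eta_3}{\eta_2}V_a$ on $[T,\infty)$. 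The comparison lemma integrates this to $V_a(z(t))\leq V_a(z(T))e^{-\frac{\eta_3}{\eta_2}(t-T)}$, and sandwiching once more with $\eta_1\|z\|^2\leq V_a\leq\eta_2\|z\|^2$ and taking square roots yields precisely \eqref{eq:z_exp_bound}.

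I expect the only genuine subtlety to be the piecewise-constant nature of $\lambda(t)$, and hence of $\Lambda(t)$, noted earlier in the paper: the differential inequality for $\dot{V}_a$ holds only almost everywhere, with jumps in $\Lambda$ at the sampling times. This causes no real difficulty, since $V_a$ remains absolutely continuous and the comparison lemma applies to an almost-everywhere inequality; moreover the hypothesis $\lambda(t)\geq\underline{\lambda}$ is uniform on $[T,\infty)$, so the decay rate $\tfrac{\eta_3}{\eta_2}$ is constant across all such jumps. Everything else reduces to a routine chain of norm and minimum-eigenvalue estimates.
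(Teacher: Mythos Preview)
Your proposal is correct and follows essentially the same route as the paper's own proof: split at $T$, invoke Lemma~\ref{lemma:aclf} on $[0,T)$, then on $[T,\infty)$ retain the $-\gamma\tilde{\theta}^\top\Lambda(t)\tilde{\theta}$ term in \eqref{eq:Vadot}, bound it below by $-\gamma\underline{\lambda}\|\tilde{\theta}\|^2$, collapse to $\dot{V}_a\leq-\tfrac{\eta_3}{\eta_2}V_a$, and finish with the comparison lemma and the sandwich bounds on $V_a$. Your added remark on the piecewise-constant nature of $\lambda(t)$ is a careful touch that the paper omits but is entirely consistent with the argument.
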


\begin{proof}
    Since $\lambda(t)\geq0$ for all $t\in[0,T)$ the conclusions of Lemma \ref{lemma:aclf} hold, implying $z(t)$ is bounded as in \eqref{eq:z_bound} for all $t\in[0,T)$. Provided $\lambda(t)\geq\underline{\lambda}$ for all $t\in[T,\infty)$ and $u=k(x,\hat{\theta})\in K_{clf}(x,\hat{\theta})$ for all $(x,\hat{\theta})\in\mathbb{R}^{n+p}$, then \eqref{eq:Vadot} can be bounded as
    \[
        \dot{V}_a\leq -c_3V(x) - \gamma\underline{\lambda}\|\tilde{\theta}\|^2\leq -\eta_3\|z\|^2\leq -\frac{\eta_3}{\eta_2}V_a.
    \]
    Invoking the comparison lemma \cite[Lem. 3.4]{Khalil} implies $t\mapsto V_a(z(t))$ is bounded for all $t\in[T,\infty)$ as 
    \[
        V_a(z(t))\leq V_a(z(T))e^{-\frac{\eta_3}{\eta_2}(t-T)},
    \]
    which can be combined with the bounds on $V_a$ to yield \eqref{eq:z_exp_bound}.
\end{proof}

Similar to the previous section, given an estimate of the uncertain parameters $\hat{\theta}$, control inputs satisfying \eqref{eq:aclf} can be computed using the following ES-aCLF-QP
\begin{equation}\label{eq:ES-aCLF-QP}
    \begin{aligned}
    \min_{u\in\mathcal{U}}\quad & \tfrac{1}{2}u^\top u \\
    \text{s.t.}\quad & L_fV(x) + L_YV(x)\hat{\theta} + L_gV(x)u\leq -c_3V(x).
    \end{aligned}
\end{equation}

A controller guaranteeing stability and safety\footnote{Similar to works such as \cite{TaylorACC20,LopezLCSS21}, the parameter update laws for the proposed adaptive CBF and CLF are different | if one wishes to combine the two in a single QP-based controller, separate estimates of the uncertain parameters must be maintained. Despite this, note that the data from a single history stack can be used in both update laws.} can then be synthesized by either taking the solution to \eqref{eq:ES-aCLF-QP} as $k_d$ in \eqref{eq:HO-RaCBF-QP} or by forming a single QP with the HO-RaCBF constraint from \eqref{eq:HO-RaCBF-QP} and a relaxed version of the ES-aCLF constraint from \eqref{eq:ES-aCLF-QP} to guarantee feasibility {\color{black} provided the resulting controller is Lipschitz continuous}.

\begin{remark}
    Note that asymptotic stability of the origin for the closed-loop system \eqref{eq:dyn2} with $u=k(x,\hat{\theta})\in K_{\text{clf}}(x,\hat{\theta})$ is guaranteed by Lemma \ref{lemma:aclf} regardless of whether or not the richness of data condition $\lambda(t)\geq\underline{\lambda}>0$ is satisfied. In this situation, the stability guarantees induced by the ES-aCLF reduce to those of the classically defined adaptive CLF from \cite[Ch. 4.1]{Krstic} (see also \cite{TaylorACC20}). In this regard, neither safety nor stability is predicated upon collecting sufficiently rich data | this data is exploited only to reduce conservatism of the HO-RaCBF controller and to endow the ES-aCLF controller with exponential convergence guarantees.
\end{remark}
    
\begin{remark}
    The concept of an ES-aCLF generalizes the adaptive control designs from \cite{DixonIJACSP19} in that the Lyapunov functions used to verify stability of the controllers proposed in \cite{DixonIJACSP19} meet the criteria of an ES-aCLF posed in Def. \ref{def:ES-aCLF}. We refer the reader to works such as \cite{AmesTAC14} for a discussion on the potential advantages of using an optimization-based CLF-based control law as in \eqref{eq:ES-aCLF-QP} over a traditional closed-form feedback control law such as those posed in \cite{DixonIJACSP19}.
\end{remark}

\section{Case Studies}\label{sec:sim}
\subsection{Robotic Navigation}
Our first example involves a robotic navigation task for a system modeled as a double integrator with uncertain friction effects. The system is in the form of \eqref{eq:dyn2}:
\[
\begin{aligned}
\underbrace{
\begin{bmatrix}
\dot{x}_1 \\ \dot{x}_2 \\ \dot{x}_3 \\ \dot{x}_4
\end{bmatrix}
}_{\dot{x}}
=
\underbrace{
\begin{bmatrix}
x_3 \\ x_4\\ 0\\ 0
\end{bmatrix}
}_{f(x)}
+
\underbrace{
\begin{bmatrix}
0 & 0  \\ 0 & 0  \\ -\tfrac{x_3}{\mathrm{m}} & 0 \\ 0 & -\tfrac{x_4}{\mathrm{m}}
\end{bmatrix}
}_{Y(x)}
\underbrace{
\begin{bmatrix}
\mu_1 \\ \mu_2
\end{bmatrix}
}_{\theta}
+
\underbrace{
\begin{bmatrix}
0 & 0\\ 0 & 0 \\ \tfrac{1}{\mathrm{m}} & 0 \\ 0 & \tfrac{1}{\mathrm{m}}
\end{bmatrix}
}_{g(x)}
\underbrace{
\begin{bmatrix}
u_1 \\ u_2
\end{bmatrix}}_{u},
\end{aligned}
\]
where $\mathrm{m}\in\mathbb{R}_{>0}$ is a known mass and $\mu_1,\mu_2\in\mathbb{R}_{>0}$ are uncertain viscous friction coefficients. For simplicity, we set $\mathrm{m}=\mu_1=\mu_2=1$. The objective is to stabilize the system to the origin while avoiding a set of static obstacles in the state space. To achieve the stabilization task, we construct an ES-aCLF as $V(x)\coloneqq x^\top Px$ with $P=[2,\, 0,\, 1,\, 0; 0,\, 2,\, 0,\, 1; 1,\, 0,\, 1,\, 0; 0,\, 1,\, 0,\, 1]$ and $c_3=1$. The safety objective is achieved by considering a collection of safe sets of the form \eqref{eq:C} with $h^{(i)}(x)\coloneqq (x_1 - p_{1,i})^2 + (x_2 - p_{2,i})^2 - R^2$, where $(p_{1,i},p_{2,i})\in\mathbb{R}^2$ denotes the center of a circular disk and $R_i\in\mathbb{R}_{>0}$ its radius. With straightforward calculations one can verify that this CBF candidate has relative degree $r=2$ with respect to the dynamics and that Assumption \ref{assumption:rd} is satisfied. Although the dynamics and safe set are relatively simple, the  aCBF techniques from \cite{TaylorACC20,LopezLCSS21,DixonACC21,SanfeliceACC21} cannot be applied to solve the problem as currently constructed since the relative degree of $h$ is larger than one\footnote{Note that it may be possible to construct a relative degree one CBF for such a problem by including dependence on velocity into the definition of $h$. As noted earlier, an advantage of HOCBFs is that the design of $\mathcal{C}$ is greatly simplified since dependence on velocity is naturally encoded by higher order terms rather than through the explicit definition of $h$.}. To illustrate the efficacy of the developed approach, we simulate the system under the HO-RaCBF/ES-aCLF control architecture presented herein and compare the results to controllers {\color{black} take a purely robust approach}. The HO-RaCBF/ES-aCLF controller is computed by taking the solution to \eqref{eq:ES-aCLF-QP} as $k_d$ in \eqref{eq:HO-RaCBF-QP}. For each simulation we construct two HOCBFs as outlined previously with $(p_{1,1},p_{2,1})=(-1.75,2)$, $(p_{1,2},p_{2,2})=(-1,0.5)$, and $R_1=R_2=0.5$. For each simulation, the system is initialized as $x(0)=(-2.5,2.5,0,0)$ and all parameter estimates are initialized at zero. All extended class $\mathcal{K}$ functions used are defined as $\alpha(r)=r$. For the learning-based approach, we maintain a history stack with $M=20$ entries, the integration window is chosen as $\Delta T=0.5$, and the learning rates are chosen as $\Gamma=I_2$ and $\gamma=10$.

To demonstrate the relationship between adaptation and safety, we run set of simulations comparing system performance using the HO-RaCBF/ES-aCLF controller to that under a purely robust approach (i.e., accounting for the maximum possible parameter estimation error without adaptation), where the set of possible parameters $\Theta$ is varied, the results of which are presented in Fig. \ref{fig:double_int_traj}-\ref{fig:theta}. The controller used in the robust approach is the same as that of the adaptive approach but with $\gamma=0$. As shown in Fig. \ref{fig:double_int_traj} and Fig. \ref{fig:cbf_traj}, each trajetory is safe; however, for larger levels of uncertainty the purely robust controller is overly conservative, causing the system trajectory to diverge from the origin. In constrast, the adaptive controller reduces the uncertainty online and achieves dual objectives of stability and safety. In fact, the convergence of the trajectory to the origin under the adaptive controller is minimally affected by the initial level of uncertainty, whereas the trajectory under the robust controller fails to converge to the origin in the presence of large uncertainty. The ability of the learning scheme to identify the uncertain parameters is showcased in Fig. \ref{fig:theta}, which shows the trajectory of the estimated parameters used in the ES-aCLF QP and HO-RaCBF QP, both of which converge to their true values in just under 15 seconds.

\begin{figure}
    \centering
    \includegraphics{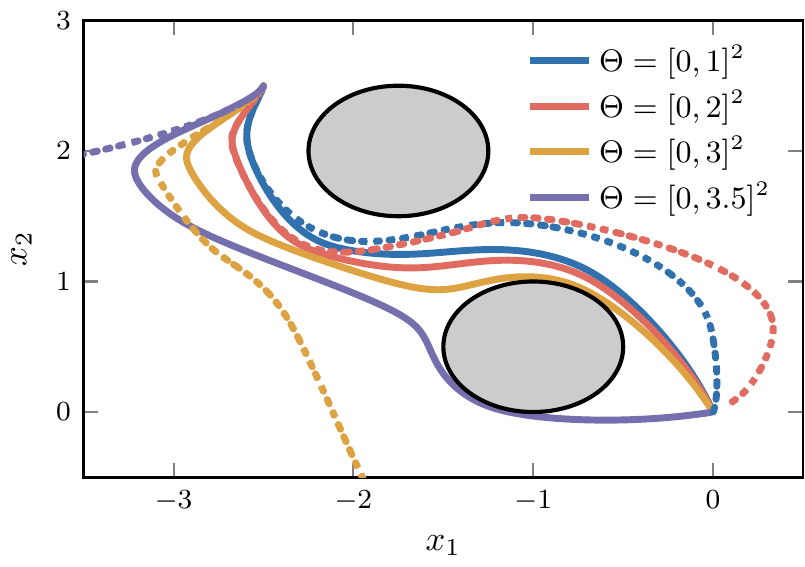}
    \caption{System trajectory under each controller across four different uncertainty sets with solid lines denoting trajectories under the adaptive controller and dashed lines denoting trajectories under the purely robust controller. The gray disks represent the obstacles.}
    \label{fig:double_int_traj}
\end{figure}

\begin{figure}
    \centering
    \includegraphics{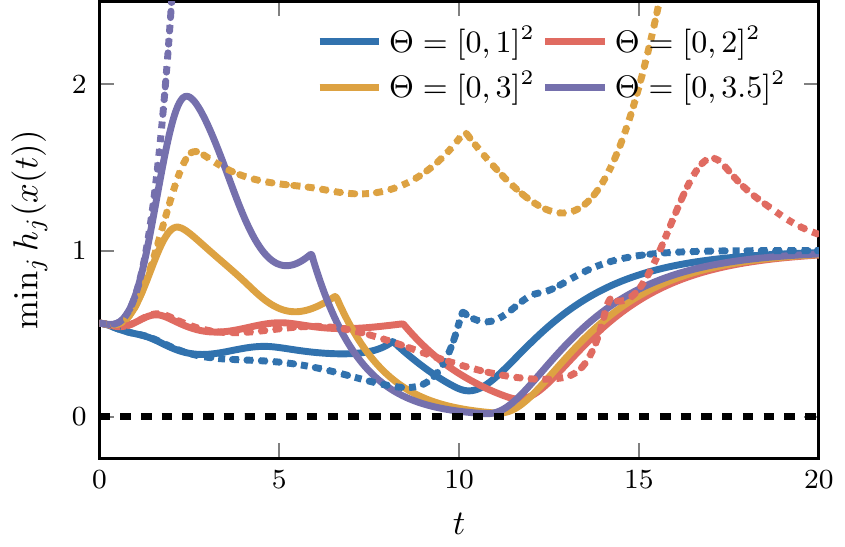}
    \caption{Minimum value among the two HOCBFs point-wise in time along each system trajectory. The solid and dashes curves have the same interpretation as those in Fig. \ref{fig:double_int_traj} and the dashed black line denotes $h(x)=0$.}
    \label{fig:cbf_traj}
\end{figure}

\begin{figure}
    \centering
    \includegraphics{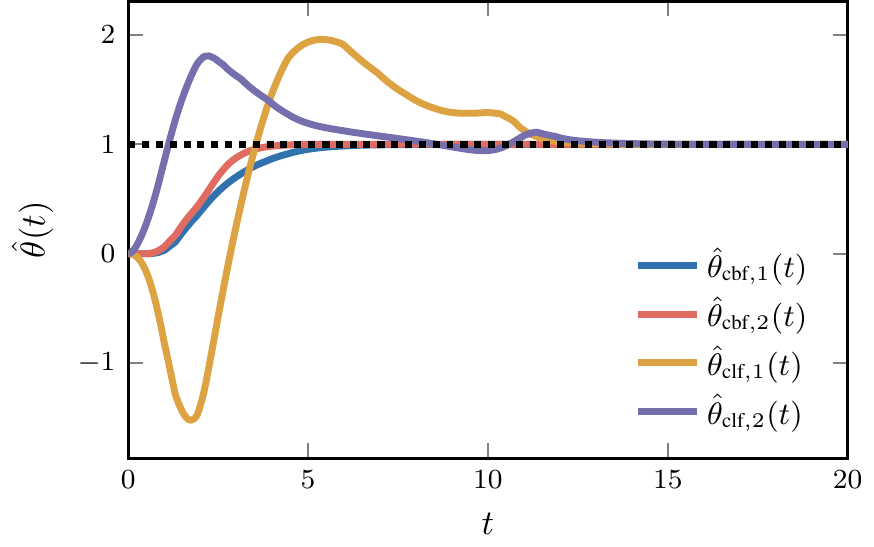}
    \caption{Estimates of the uncertain parameters used in the ES-aCLF QP and HO-RaCBF QP for the simulation corresponding to the uncertainty set $\Theta=[0,3]^2$}
    \label{fig:theta}
\end{figure}

\subsection{Inverted Pendulum}
To demonstrate the applicability of the developed results to an unstable nonlinear system with uncertain parameters, we now consider an inverted pendulum of the form \eqref{eq:dyn2} as
\[
\begin{aligned}
\underbrace{
\begin{bmatrix}
\dot{x}_1 \\ \dot{x}_2
\end{bmatrix}
}_{\dot{x}}
=
\underbrace{
\begin{bmatrix}
x_2 \\ 0
\end{bmatrix}
}_{f(x)}
+
\underbrace{
\begin{bmatrix}
0 & 0  \\  \tfrac{1}{\ell}\sin(x_1) & -\tfrac{1}{\mathrm{m}}x_2
\end{bmatrix}
}_{Y(x)}
\underbrace{
\begin{bmatrix}
\mathrm{g} \\ \mathrm{c}
\end{bmatrix}
}_{\theta}
+
\underbrace{
\begin{bmatrix}
0 \\ \tfrac{1}{\mathrm{m}\ell^2}
\end{bmatrix}
}_{g(x)}
u
\end{aligned},
\]
with length $\ell=0.7$, mass $\mathrm{m}=0.7$, gravitational acceleration $\mathrm{g}=9.8$, and damping coefficient $\mathrm{c}=0.2$. The objective is to regulate $x$ to the origin while satisfying $-\tfrac{\pi}{4}\leq x_1(t)\leq \tfrac{\pi}{4}$ for all time. To achieve the stabilization objective we select the aCLF $V(x)=x^\top Px$ with $P=[1,\,0.5;\,0.5,\,0.5]$ and $c_3=2.5\tfrac{\lambda_{\min}(Q)}{\lambda_{\max}(P)}$, where $Q=[2,\,1;1,\,1]$. The safety objective is achieved by constructing two safe sets defined by $h_1(x)=x_1 + \tfrac{\pi}{4}$ and $h_2(x)=\tfrac{\pi}{4} - x_1$, both of which have relative degree 2 with respect to the dynamics. Again, the results of \cite{TaylorACC20,LopezLCSS21,DixonACC21,SanfeliceACC21} are not applicable to this example since $h$ has relative degree larger than one. The uncertain parameters $\theta=(\theta_1,\theta_2)=(\mathrm{g},\mathrm{c})$ are assumed to take values from $\Theta=[7,13]\times[0,3]\subset\mathbb{R}^2$ and higher order terms from \eqref{eq:psi} are defined using $\alpha_1(r)=\alpha_2(r)=5r$. 

To compare the performance between various control architectures, the system is simulated under the influence of 1) a HO-RaCBF/ES-aCLF controller as in \eqref{eq:HO-RaCBF-QP} with $k_d$ obtained from solving the ES-aCLF QP \eqref{eq:ES-aCLF-QP}; 2) an ES-aCLF controller as in \eqref{eq:ES-aCLF-QP}; 3) an ``open-loop" HO-RaCBF controller as in \eqref{eq:HO-RaCBF-QP} with $k_d\equiv 0$ and an initial condition outside the safe set. For each simulation all parameters associated with the learning scheme remain the same as in the previous example. The results the simulations are presented in Fig. \ref{fig:inv_pen_traj} and Fig. \ref{fig:inv_pen_theta}. Although the ES-aCLF controller achieves the stabilization objective (orange curves in Fig. \ref{fig:inv_pen_traj}), it does so at the cost of violating the safety constraints. In essence, for the ES-aCLF controller to achieve the stabilization objective, it must first allow the pendulum to tip over to collect sufficiently rich data for identifying the dynamics, which, once collected, allows for rapid stabilization to the origin. In contrast, by augmenting the ES-aCLF controller with a HO-RaCBF (blue curves in Fig. \ref{fig:inv_pen_traj}), the stabilization objective is achieved \emph{safely} in an exponential fashion. As predicted by Corollary \ref{corollary:HO-RaCBF}, the HO-RaCBF controller is capable of stabilizing the system to the safe set even when starting from an unsafe initial condition (purple curve in Fig. \ref{fig:inv_pen_traj}). Furthermore, all conservatism associated with the initial parameter uncertainty has essentially been eliminated | once the system is stabilized to the safe set, the HO-RaCBF controller allows the pendulum to lie on the boundary of the safe set without crossing it. Similar to the previous example, the uncertainty in the parameter estimates is exponentially driven to zero (see Fig. \ref{fig:inv_pen_theta}), implying that after only a few seconds of learning, the HO-RaCBF and ES-aCLF conditions from Def. \ref{def:HO-RaCBF} and Def. \ref{def:ES-aCLF}, respectively, closely approximate the original HOCBF condition from Def. \ref{def:HOCBF} and ES-CLF condition from e.g. \cite{AmesTAC14,AmesTAC17,AmesECC19}.

\begin{figure}
    \centering
    \includegraphics{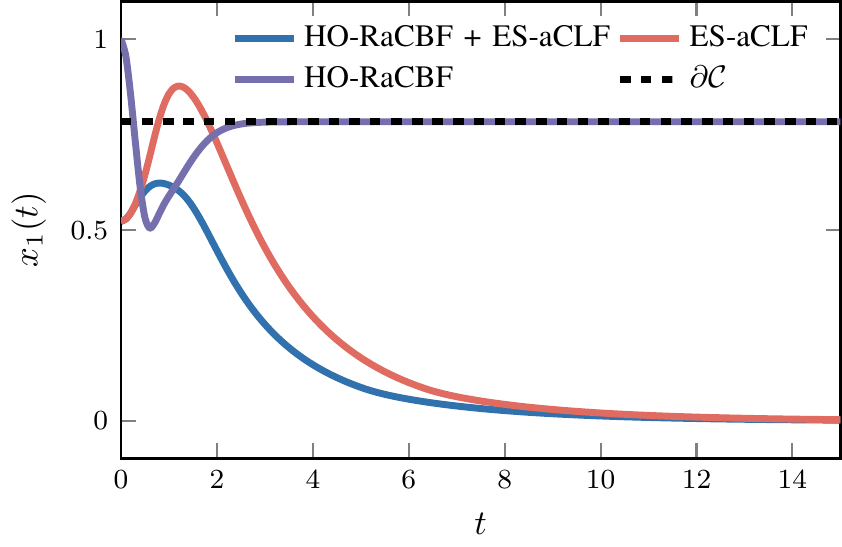}
    \caption{Evolution of the pendulum's orientation under each control architecture. The dashed black line denote the boundary of the safe set.}
    \label{fig:inv_pen_traj}
\end{figure}

\begin{figure}
    \centering
    \includegraphics{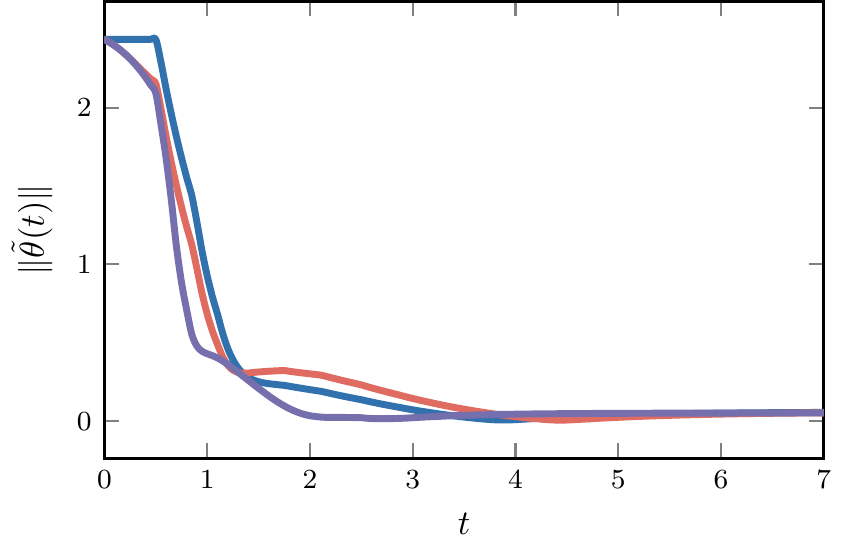}
    \caption{Norm of the parameter estimation error under each control architecture. The color of each curve shares the same interpretation of those in Fig. \ref{fig:inv_pen_traj}.}
    \label{fig:inv_pen_theta}
\end{figure}

\section{Conclusions}
In this paper we introduced HO-RaCBFs and ES-aCLFs as a means to synthesize safe and stable control policies for uncertain nonlinear systems with high relative degree safety constraints. The novel class of HO-RaCBF is, to the best of our knowledge, the first to extend the aCBF paradigm from \cite{TaylorACC20} to CBFs with arbitrary relative degree under mild assumptions regarding the structure of the uncertainty and, unlike existing formulations, the proposed HO-RaCBF inherits desirable robustness properties of zeroing CBFs \cite{AmesADHS15}. The class of ES-aCLFs introduced herein builds upon the classical aCLF formulation \cite[Ch. 4.1]{Krstic} by leveraging data-driven techniques from CL adaptive control to guarantee exponential stability. The advantages of the proposed HO-RaCBFs/ES-aCLFs were illustrated through two numerical examples that cannot be addressed by existing approaches. Directions for future research include relaxing Assumption \ref{assumption:rd} and extending the approach to systems with nonparametric and actuation uncertainty. 

\section*{Acknowledgements}
We thank the anonymous reviewers for their helpful comments and suggestions, which have improved the presentation of the material in paper.

\bibliographystyle{ieeetr}
\bibliography{%
barrier,%
books,%
adaptive,%
hybrid,%
mpc,%
nonlinear%
}

\end{document}